\documentclass[]{amsart}




\setcounter{page}{1}


\usepackage{verbatim}
\usepackage{amssymb}
\usepackage{amsbsy}
\usepackage{amscd}
\usepackage{amsmath}
\usepackage{amsthm}
\usepackage{amsrefs}
\usepackage{graphicx}
\usepackage[mathscr]{eucal}
\usepackage[top=20mm,bottom=20mm,right=20mm,left=20mm]{geometry}

\newtheorem{theorem}{Theorem}
\newtheorem{lemma}{Lemma}
\theoremstyle{definition}
\newtheorem{defn}{Definition}\numberwithin{defn}{section}

\newcommand{\tr}{\operatorname{tr}}

\newcommand{\R}{\mathbb{R}}

\renewcommand{\H}{\mathbb{H}}
\newcommand{\C}{\mathbb{C}}

\newcommand{\dist}{\operatorname{dist}}



\title[Harmonic maps and general relativity]{Harmonic maps with prescribed singularities\\ and applications in general relativity}


\author[]{Gilbert Weinstein}


\address{Department of Mathematics \& Department of Physics\\
	Ariel University, Ariel, Israel 40700}

\thanks{The author wishes to express his thanks to the organizers of the MSJ-SI for their invitation to participate in this conference, and for their hospitality in Sapporo.}


\email{gilbertw@ariel.ac.il}




\subjclass[2010]{}


\keywords{Harmonic maps, black holes}

\begin{document}

\begin{abstract}
This paper presents a general existence and uniqueness result for harmonic maps with prescribed singularities into non-positively curved targets, and surveys a number of applications to general relativity. It is based on a talk delivered by the author at \emph{The 11th Mathematical Society of Japan  Seasonal Institute, The Role of Metrics in the Theory of Partial Differential Equations}.
\end{abstract}

\maketitle

\section{Introduction}

We begin with a brief review of harmonic maps. Let $(M,g)$ and $(N,h)$ be Riemannian manifolds and let $\varphi\colon M\to N$ be a smooth map. Then $d\varphi\colon M \to TM \otimes \varphi^{-1} TN$ is a cross-section of the tangent bundle of $M$ tensor-product with the pull back bundle of $TN$ under $\varphi$. In local coordinates we have:
\[
d\varphi = \frac{\partial\varphi^\mu}{\partial x^i} \, dx^i\otimes\frac{\partial}{\partial y^\mu}.
\]
There is a canonical positive definite inner product on this bundle with the norm expressed in local coordinates:
\[
|d\varphi|^2 = \frac{\partial \varphi^\mu}{\partial x^i}\, \frac{\partial \varphi^\nu}{\partial x^j}\, g^{ij}\, h_{\mu\nu}.
\]
From this we can form an energy:
\[
E_\Omega(\varphi) = \int_\Omega |d\varphi|^2\, dV_g
\]

\begin{defn}
	A map $\varphi\colon M\to N$ is \emph{harmonic} if for each $\Omega\subset\subset M$ it is a critical point of $E_\Omega$. The map $\varphi$ is an \emph{local energy minimizer} if for every $\Omega\subset\subset M$, $E_\Omega(\varphi) \leq E_\Omega(\psi)$ for all $\psi$ which agrees with $\varphi$ on the boundary of $\Omega$. It is an energy minimizer if the same is true with $\Omega=M$.
\end{defn}

Clearly every local energy minimizer is a harmonic map. The Euler-Lagrange equations read
\begin{equation} \label{harmmap}
\tau^\mu(\varphi)=\Delta_g \varphi^\mu + \Gamma^\mu_{\alpha\beta}(\varphi) \, g(\nabla\varphi^\alpha,\nabla\varphi^\beta) = 0,
\end{equation}
where $\Gamma^\mu_{\alpha\beta}(\varphi)$ are the Christoffel symbols of $N$ evaluated along $\varphi$, and $\Delta_g=\tr\nabla^2$ is the Laplace-Beltrami operator on $M$. This forms a system of semi-linear elliptic PDEs on $M$ with the nonlinearity being quadratic in the first derivatives of $\varphi$. For an arbitrary map $\varphi$, the left-hand side $\tau$ is a cross section of $\varphi^{-1} TN$ called the \emph{tension} of $\varphi$.

Familiar examples of harmonic maps are the two cases: (i) $\operatorname{dim} N=1$, then $\Gamma=0$ and the harmonic map problem reduces to the linear Laplace equation; (ii) $\operatorname{dim} M=1$, then the harmonic map problem becomes the ODE problem for geodesics. The first significant nonlinear PDE result concerning harmonic maps was obtained by Eells-Sampson~\cite{eellssampson}: \emph{If the sectional curvature of $N$ satisfies $\kappa_N\leq 0$, and $\varphi_0\colon M\to N$ is any smooth map, then there is a harmonic map $\varphi$ homotopic to $\varphi_0$.} Without the assumption on $\kappa_N$, harmonic maps and even energy minimizer can develop singularities, and the best one can do is control the size of the singular set~\cite{schoenuhlenbeck}.

\section{Harmonic maps with prescribed singularities} \label{harmap}

Consider first the linear case. Let $(M,g)$ be a complete Riemannian manifold\footnote{A similar result can be stated for a compact manifold with boundary.} and let $\Gamma\subset M$ be a submanifold of co-dimension $\geq 2$. Let $\rho\colon\Gamma\to\R^+$ be a positive continuous function, and let $G(x,y)$ be the Green function for $\Delta_g$ on $M$, then
\[
u(x) = \int_\Sigma \rho(y)\, G(x,y)\, dv_\Sigma(y).
\]
is a harmonic function on $M\setminus\Gamma$ which blows up on $\Gamma$, $\Delta_g u = \rho\delta_\Gamma$. Now if $\gamma$ is a geodesic in $N$, then $\gamma\circ u$ is a harmonic map into $N$ which blows up on $\Gamma$. Before we state and prove a nonlinear analog of this result, we need two definitions.

\begin{defn} \label{model}
	A \emph{model map} is a map $\varphi_0\colon M\setminus\Gamma\to N$ such that (i) $|\tau(\varphi_0)|$ is bounded; and (ii) $|\tau(\varphi_0)|$ decays "fast enough", in the sense that there is a function $v>0$ with $v\to0$ at infinity and $|\tau(\varphi_0)|\leq -\Delta_g v$.
\end{defn}

It turns out for instance that for $M=\R^3$, the case of interest for our applications, it is sufficient to require $|\tau(\varphi_0)|=O(r^{-2-\epsilon})$ with $\epsilon>0$.

\begin{defn}
	Two maps $\varphi_1,\varphi_2\colon M\setminus\Gamma\to N$ are \emph{asymptotic} if $\dist_N(\varphi_1,\varphi_2)$ is (i) bounded; and (ii) falls off to $0$ at infinity.
\end{defn}

Part (i) of this definition is inspired by the analogous definition for geodesics~\cite{eberleinoneill}.

\begin{theorem}	\label{existence-uniqueness}
	Let $\varphi_0\colon M\setminus\Gamma\to N$ be a model map, and let either $-b^2<\kappa_N<-a^2$, or $N$ be a symmetric space of non-compact type and rank $\leq 2$, then there is a unique harmonic map $\varphi\colon M\setminus\Gamma\to N$ which is asymptotic to $\varphi_0$.
\end{theorem}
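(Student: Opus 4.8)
The plan is to prove existence by an exhaustion argument combined with a priori estimates coming from the non-positive curvature, and uniqueness by a convexity/maximum-principle argument. Let me outline both halves.

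For existence, I would first exhaust $M\setminus\Gamma$ by a nested sequence of compact domains $\Omega_j$ (e.g. intersect a geodesic ball of radius $j$ with the complement of a $1/j$-tubular neighborhood of $\Gamma$). On each $\Omega_j$ I would solve the Dirichlet problem for the harmonic map equation \eqref{harmmap} with boundary data $\varphi_0|_{\partial\Omega_j}$; since the target has non-positive curvature, the classical theory of Hamilton and Schoen--Uhlenbeck guarantees a unique solution $\varphi_j$, and moreover it is energy minimizing relative to its boundary values. The crucial point is then to obtain estimates on $\varphi_j$ that are uniform in $j$, so that a subsequence converges on compact subsets of $M\setminus\Gamma$ to a harmonic map $\varphi$. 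The key tool is that, when $\kappa_N\le 0$, the function $\dist_N(\varphi_0,\varphi_j)$ is subharmonic up to a term controlled by $|\tau(\varphi_0)|$; precisely, one has a differential inequality of the form $\Delta_g\, \dist_N(\varphi_0,\varphi_j)\ge -|\tau(\varphi_0)|$ on $\Omega_j$. Comparing with the supersolution $v$ furnished by the model map hypothesis (ii) via the maximum principle gives $\dist_N(\varphi_0,\varphi_j)\le v$, hence a uniform bound that both stays bounded and decays at infinity. Standard interior elliptic estimates for the semilinear system \eqref{harmmap} — using the uniform $C^0$ bound to control the quadratic gradient nonlinearity, first to get a uniform energy bound and then, via $\epsilon$-regularity, higher derivative bounds — then yield local convergence and a limiting harmonic map $\varphi$ asymptotic to $\varphi_0$. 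The rank $\le 2$ symmetric space case is handled by the same scheme, with the strict negative pinching replaced by the non-positive curvature plus the additional structure (the Kähler or totally geodesic submanifold arguments that substitute for strict convexity when flats are present).

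For uniqueness, suppose $\varphi_1,\varphi_2$ are both harmonic and asymptotic to $\varphi_0$, hence asymptotic to each other. The function $w=\dist_N(\varphi_1,\varphi_2)$ is then subharmonic, $\Delta_g w\ge 0$, on $M\setminus\Gamma$ (this is the standard computation for harmonic maps into $\kappa_N\le 0$ targets), it is bounded, and it tends to $0$ at infinity. Near $\Gamma$, $w$ is bounded and $\Gamma$ has codimension $\ge 2$, so $\Gamma$ is removable for the bounded subharmonic function $w$ and the inequality extends across it. A bounded subharmonic function on the complete manifold $M$ that vanishes at infinity must be $\le 0$ by the maximum principle, hence $w\equiv 0$ and $\varphi_1=\varphi_2$. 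When $\kappa_N<-a^2<0$ strictly, one in fact gets $\Delta_g w\ge a^2 |d(\varphi_1,\varphi_2)$-type$|^2\, w$-style estimates giving even cleaner control, while in the symmetric-space case one again appeals to the convexity of the distance function along the geodesics of a non-positively curved symmetric space (Bruhat--Tits), valid in rank $\le 2$ in the relevant sense.

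The main obstacle, and the heart of the argument, is the pair of behaviors at the two ``bad'' sets: the prescribed singular set $\Gamma$ and infinity. At $\Gamma$ one must ensure the comparison functions and the removable-singularity arguments genuinely apply — this is exactly why the codimension $\ge 2$ hypothesis and the boundedness clauses in the definitions of model map and asymptotic are imposed. At infinity, the whole scheme rests on producing the supersolution $v$ with $v\to 0$ and $|\tau(\varphi_0)|\le -\Delta_g v$; this is the content of hypothesis (ii) in Definition \ref{model}, and (as the remark after that definition indicates) verifying it in concrete cases such as $M=\R^3$ reduces to a decay rate $|\tau(\varphi_0)|=O(r^{-2-\epsilon})$. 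Getting the uniform interior gradient estimates for $\varphi_j$ near $\Gamma$ — where the maps blow up — without losing control of the quadratic nonlinearity is the delicate technical point; one rescales near $\Gamma$ and uses that the target distance is still controlled to run the Schoen--Uhlenbeck $\epsilon$-regularity locally.
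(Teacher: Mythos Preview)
Your overall architecture matches the paper's: exhaust $M\setminus\Gamma$ by compact domains, solve Dirichlet problems with boundary data $\varphi_0$, obtain a uniform sup bound on $\dist_N(\varphi_0,\varphi_\delta)$ by comparing with the supersolution $v$ from Definition~\ref{model}, then extract derivative estimates and pass to a limit; uniqueness via subharmonicity of the distance between two harmonic maps, removability of $\Gamma$ by codimension, and the maximum principle. (One minor point: the paper works with $\sqrt{1+d_N^2}$ rather than $d_N$ itself, precisely so that the differential inequality holds globally, including on the set $\{d_N=0\}$.)

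There is, however, a genuine gap in your passage from the $C^0$ bound to the local energy bound, and it is exactly here that you misplace the role of the curvature-pinching / rank $\leq 2$ hypothesis on $N$. You appeal to ``standard interior elliptic estimates'' and Schoen--Uhlenbeck $\epsilon$-regularity, but for a semilinear system with quadratic gradient nonlinearity a sup bound alone does not yield an energy bound by standard linear theory, and $\epsilon$-regularity \emph{presupposes} small energy rather than producing it. The paper's mechanism is different and quite specific: the hypothesis on $N$ is used to produce global coordinates on the target of the form
\[
ds^2 = \sum_{j=1}^k dr_j^2 + Q(\xi,\xi), \qquad 2a\,Q(\xi,\xi)\leq \frac{\partial}{\partial r_j}Q(\xi,\xi)\leq 2b\,Q(\xi,\xi),
\]
with $k\in\{1,2\}$. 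In these coordinates the $r_j$-components of the harmonic map system read $\Delta_g r_j = \partial_{r_j}Q(\nabla\xi,\nabla\xi)\geq 2a\,Q(\nabla\xi,\nabla\xi)$; multiplying by $\chi^2 r_j$ and by $\chi^2$ for a cutoff $\chi$ and integrating by parts converts the pointwise bound on $r_j$ (i.e.\ the $C^0$ bound already obtained) into a local energy bound. Only then does the Bochner identity, using $\kappa_N\leq 0$, upgrade the $L^1$ bound on $|d\varphi_\delta|^2$ to $L^\infty$, after which one bootstraps. In particular, the pinching or rank $\leq 2$ condition is \emph{not} used in the uniqueness argument (non-positive curvature suffices there), nor through any ``K\"ahler or totally geodesic submanifold'' considerations; it enters solely to guarantee the existence of this coordinate system on $N$, and the paper explicitly conjectures that such coordinates exist in any rank.
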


\begin{proof}
For completeness, we give some of the details of the proof. We need the following Lemma~\cite{weinsteinMRL}.

\begin{lemma} \label{supbound}
Let $\varphi_1,\varphi_2\colon M\setminus\Gamma\to N$ be smooth maps, and suppose $\kappa_N\leq0$, then 
\[
	\Delta_g\sqrt{1+\dist_N(\varphi_1,\varphi_2)^2} \geq -(|\tau(\varphi_1)|+|\tau(\varphi_2)|).
\]
\end{lemma}

The Hessian $H$ of the distance function $d_N\colon N\times N\to\R$ is positive semi-definite, thanks to the hypothesis $\kappa_N\leq 0$~\cite{schoenyau1979compact}, hence denoting $d_p(\cdot)=d_N(p,\cdot)$,  we have
\[
	\Delta_g d_N(\varphi_1,\varphi_2) = \nabla d_{\varphi_1} \cdot \tau(\varphi_2)
	+ \tau(\varphi_1) \cdot \nabla d_{\varphi_2} + H(d\varphi_1+d\varphi_2,d\varphi_1+d\varphi_2)
	\geq -(|\tau(\varphi_1)|+|\tau(\varphi_2)|),
\]
wherever $d_N(\varphi_1,\varphi_2)>0$. Now, we get
\[
	\Delta\sqrt{1+d_N^2}\geq
	\frac{d_N}{\sqrt{1+d_N^2}}\, \Delta d_N \geq -\frac{d_N}{\sqrt{1+d_N^2}}\,(|\tau(\varphi_1)|+|\tau(\varphi_2)|)
	\geq -(|\tau(\varphi_1)|+|\tau(\varphi_2)|),
\]
when $d_N>0$, and since $\sqrt{1+d_N^2}\geq0$ has a global minimum on the set $\{d_N=0\}$, it follows trivially that $\Delta\sqrt{1+d_N^2}\geq0$ also there. This completes the proof of the Lemma.

We now return to the proof of the Theorem. We first establish uniqueness. If $\varphi_1,\varphi_2\colon M\setminus\Gamma\to N$ are both harmonic and asymptotic to $\varphi_0$, then $\varphi_1$ is asymptotic to $\varphi_2$ and by Lemma~\ref{supbound}, $u=d_N(\varphi_1,\varphi_2)$ is a bounded subharmonic function on $M\setminus\Gamma$ which tends to $0$ at infinity. Since $\Gamma$ is of co-dimension $\geq2$, it follows from Lemma~8 in~\cite{weinsteinDMJ} that $u$ is weakly harmonic on $M$. Hence by the maximum principle $u=0$ and $\varphi_1=\varphi_2$.

We now prove existence. Fix some point $x_0\in M\setminus\Gamma$, pick $\delta>0$, and let $M_\delta=B_{1/\delta}(x_0) \cap \{x\in M\colon d_M(x,\Gamma)>\delta\}$. Since $\overline M_\delta$ is compact and $\varphi_0$ is smooth on $\overline M_\delta$, there exists a harmonic map $\varphi_\delta\colon M_\delta\to N$ such that $\varphi_\delta=\varphi_0$ on $\partial M_\delta$. Let $u_\delta = \sqrt{1+d_N(\varphi_0,\varphi_\delta)^2}$, then by Lemma~\ref{supbound} and Definition~\ref{model}, we have
\[
	\Delta_g (u_\delta-v) \geq 0,
\]
on $M_\delta$, where $v$ is the function in Definition~\ref{model}, hence $v>0$ and $v\to0$ at infinity. Also $u_\delta-v\leq 1$ on $\partial M_\delta$, thus by the maximum principle it follows immediately that $u_\delta\leq C$ independent of $\delta>0$, and $d_N(\varphi_0,\varphi_\delta)\to0$ at infinity.

We now consider the maps $\varphi_\delta$ restricted to a fixed set $M_{\delta_0}$. These map into a fixed ball $B_R\subset N$, and hence we can obtain local pointwise a priori bounds uniform in $\delta$  not just for $\varphi_\delta$ but also for any number of derivatives. From here it is straightforward to obtain a sequence $\delta_i\to$ such that $\varphi_{\delta_i}$ converges uniformly on compact subsets of $M\setminus\Gamma$ together with two derivatives, and hence the limit $\varphi$ will be harmonic and asymptotic to $\varphi_0$.

To achieve this, the first step is to obtain a priori local energy bounds. These follow from the pointwise bound above and require a particular coordinate system on the target $N$ of the harmonic map:
\[
	ds^2 = \sum_{j=1}^k dr_j^2 + Q(\xi,\xi)
\]
where $1\leq k\leq 2$ is either 1 or the rank of $N$, and $Q$ is a quadratic form on $\R^{l-k}$ satisfying
\[
	2a\, Q(\xi,\xi) \leq \frac{d}{dr_j} Q(\xi,\xi) \leq 2b\, Q(\xi,\xi), \quad \forall \xi\in\R^{l-k},
\]
for some constants $0<a<b$. Here where $l$ is the dimension of $N$.
The existence of such a coordinate system for symmetric spaces of non-compact type and rank $2$ was proved in~\cite{khuriweinsteinyamada2017}. However, we conjecture the same holds true for any rank.
In this coordinate system, $k$ of the harmonic map equations read
\begin{equation} \label{eqn:r}
	\Delta_g r_j = \frac{\partial}{\partial r_j} Q(\nabla \xi,\nabla \xi) \geq 2a.
\end{equation}
Multiplying by $\chi^2 r_j$, where $\chi\in C^\infty_0(M_{\delta_0})$ is a cut-off function, and integrating by parts, we obtain
\[
	\int_{M_{\delta_0}} \chi^2 |\nabla r_j|^2 \leq 4 \left(\sup_{M_{\delta_0}} r_j^2\right) \int_{M_{\delta_0}} |\nabla \chi|^2.
\]
Now going back to Equation\eqref{eqn:r}, multiplying by $\chi^2$ and integrating by parts, we obtain
\[
	\int_{M_{\delta_0}} \chi^2 Q(\nabla\xi,\nabla\xi) \leq \frac1a \int_{M_{\delta_0}} \chi \nabla\chi \nabla r_j
	\leq \frac1a \left( \int_{M_{\delta_0}} |\nabla r_j|^2 \right)^{1/2}
	\left( \int_{M_{\delta_0}} |\nabla \chi|^2 \right)^{1/2} 
\]
Combining these two inequalities gives the required local energy estimate $E_\Omega(\varphi_\delta) \leq C$ where $C$ depends on $\Omega\Subset M_{\delta_0}$ but is independent of $\delta$.

We can now establish a pointwise bound on the derivatives using the Bochner identity:
\[
	\Delta_g |d\varphi_\delta|^2 = |\nabla d\varphi_\delta|^2 + \operatorname{Ric_M}(d\varphi_\delta,d\varphi_\delta)
	- \operatorname{Riem_N}(d\varphi_\delta,d\varphi_\delta,d\varphi_\delta,d\varphi_\delta).
\]
Since $\operatorname{Riem_N}\leq0$, it follows
\[
	\Delta_g |d\varphi_\delta|^2 \geq - C |d\varphi_\delta|^2,
\]
where $C=\sup_{M_{\delta_0}}|\operatorname{Ric_M}|$, and standard elliptic theory immediately gives an $L^\infty$ on $|d\varphi_\delta|^2$ from the $L^1$ bound already obtained. One can now bootstrap using the harmonic map equations~\eqref{harmmap} to get pointwise bounds on any number of derivatives independent of $\delta$.
\end{proof}

\section{Applications in general relativity}

The applications to general relativity of harmonic maps with prescribed singularities stem from the following fact or variations thereof.
Under the hypothesis of \emph{stationarity} and $(D-3)$-\emph{axisymmetry}, i.e.\ admitting an isometric action of the group $\R\times U(1)^{D-3}$, the Einstein vacuum equations reduce to a harmonic map problem
$\varphi\colon \R^3\setminus\Gamma\to N$ where $\Gamma\subset$ $z$-axis, and $N=SL(D-2,\R)/SO(D-2)$. Note that $N$ is a symmetric space of non-compact type and rank $D-3$. We recall that the \emph{rank} is the maximal dimension of a \emph{flat}, i.e.\ a totally geodesic submanifold with vanishing curvature.
For the sake of brevity, we will not expound here further on this reduction; see~\cite{khuriweinsteinyamada2017}*{Section~2} for more details.
We only comment on two essential features: the quotient space is 2 dimensional, and the Einstein equations decouple so that the system becomes semi-linear.

\subsection{Non-degenerate vacuum black holes in 4-d gravity}

In this case, the target is the hyperbolic plane $N=SL(2,\R)/SO(2)=\H^2$~\cite{weinsteinMRL}.
This line of investigation, which begun in the static linear case by Bach-Weyl~\cite{bachweyl}, was pursued further in a series of papers by this author starting with~\cite{weinsteinCPAM1990}, culminating with~\cite{weinsteinMRL}, in which an earlier version of Theorem~\ref{existence-uniqueness} was proved.

The metrics considered are of the form
\begin{equation} \label{metric4d}
 g=-f^{-1}\rho^2 dt^2+f(d\phi+vdt)^2+f^{-1}e^{2\sigma}(d\rho^2+dz^2),
\end{equation} 
and are asymptotically flat.
The \emph{twist potential} is defined by $d\omega=*(\xi\wedge d\xi)$, and is related to the metric function $v$ by $dv=-\rho f^{-2} *_2 d\omega$, where $*_2$ is the Hodge dual in the flat $\rho z$-plane. The one form $\omega$ measures how far $\xi$ is from being hypersurface orthogonal. It vanishes in the \emph{static} case, indicating the absence of rotation, in which case the Einstein equations reduce to one linear equation~\footnote{This is the case studied by Bach and Weyl in 1917.}. The map blows up on the axis $\Gamma$. The singular boundary conditions can be used to specify the number of black holes, their mass and their angular momenta. Once that data is prescribed, one constructs a model map $\varphi_0$, and there is, by Theorem~\ref{existence-uniqueness}, a unique solution to the harmonic map problem.

Also as pointed out in~\cite{weinsteinMRL}, given such a harmonic map, one can reconstruct a solution of the original Einstein vacuum equations. The metric~\eqref{metric4d} thus constructed can be extended across $\Gamma$ if and only if there are no conical singularities, and if the metric components are sufficiently smooth up to the axis. The latter condition, which we call \emph{analytic regularity}, was established in~\cite{weinsteinCPAM1990}; see also generalizations in~\cites{litian92,nguyen2011}. The angle deficiency $e^{\beta}=\lim_{\rho\to0}\rho^2 e^{2\sigma} f^{-2}$ is constant along any connected component of the axis, and it in fact vanishes on the two unbounded component~\cite{khuriweinsteinyamada2018}. With two black holes, it is related to the gravitational force acting between them $F=1/4 (e^{-\beta/2}-1)$. In the static linear case, Bach and Weyl first calculated $F=m_1 m_2/(R^2-(m_1+m_2)^2)$, where $m_1$ and $m_2$ is a measure of the masses of the two bodies, and $R-m_1-m_2$ is a measure of their separation. It is conjectured that it is impossible to balance several rotating black holes. This is equivalent to a conical singularity always being present. Some results were obtained in~\cites{litian,weinsteinTAMS}.

These results were generalized to the Einstein-Maxwell equations, and also later further generalized to the Einstein-Abelian-Yang-Mills equations, in which case $N=\H^2_\C$, and $\H^k_\C$ respectively~\cite{weinsteinMRL}.

\subsection{Degenerate black holes in 4-d gravity}

The generalization of these results to the \emph{degenerate} horizon case, sometimes also called the~\emph{extreme} case, is primarily motivated by an idea of S.~Dain~\cite{dain2008} which allows one to prove lower bounds on the ADM mass of axisymmetric data in terms of angular momentum and charge. Dain showed that the ADM mass is bounded below by a renormalized harmonic map energy and that this renormalized energy is minimized by the harmonic maps corresponding to extreme black holes.
In particular, Dain~\cite{dain2008} proved the inequality $m\geq |J|$ where $m$ is the ADM mass and $J$ the angular momentum. This was generalized to multiple black holes where we proved $m\geq \mathcal F$, and $\mathcal F$ is the harmonic map renormalized energy for the corresponding extreme black hole~\cite{chruscielliweinstein}. Dain's result was later generalized to the charged single black hole case~\cite{chruscielcosta} $m^2\geq\frac12\left(q + \sqrt{q^2+4J^2} \right)$, and we again generalized this to the multiple black hole case~\cite{khuriweinstein2015}. 

A key ingredient in these results is the construction of the minimizer corresponding to multiple black holes. The existence result for harmonic maps with prescribed singularities described in Section~\ref{harmap} is used to achieve this goal.

In the multiple black hole case, both in vacuum and in the charged case, the lower bound is not explicit, but it is conjectured that it is the same as in the single black hole case with $q$ and $J$ denoting the total charge and angular momentum.

\subsection{Black holes in 5-d gravity}

In this latest application of harmonic maps with prescribed singularities, we considered the problem of vacuum black holes in 5-d with non-spherical topology. In 4-d, a result of Hawking~\cite{hawking1972} shows that the cross section of any connected component of the event horizon in an asymptotically flat (AF) stationary spacetime satisfying the dominated energy condition (DEC), has positive Euler characteristic, and hence must be topologically a $2$-sphere. However, the higher dimensional analogous result by Galloway and Schoen~\cite{schoengalloway} only shows that black hole boundaries are $(n-2)$-dimensional Riemannian manifolds with positive Yamabe invariant. In particular, in 5-d this allows prime 3-d manifolds of positive Yamabe type, $S^3$, $S^2\times S^1$, and the lens spaces $L(p,q)$. Some explicit examples with $S^2\times S^1$ horizon topology, referred to as Black Rings, are known~\cites{emparanreall2008,pomeranskysenkov}.

The target for the harmonic map problem is now the symmetric space $N=SL(3,\R)/SO(3)$. 
In~\cite{khuriweinsteinyamada2017}, we constructed harmonic maps corresponding to AF spacetimes, and in~\cite{khuriweinsteinyamada2018} solutions corresponding to asymptotically locally Euclidean (ALE) spacetimes, i.e.\ whose spacelike infinity is modeled on $(a,\infty)\times L(p,q)$, as well as asymptotically Kaluza-Klein (AKK), i.e.\ with spacelike infinity modeled on $(a,\infty)\times S^1\times S^2$. Here too, we are able to construct multiple black hole solutions, provided the necessary topological obstructions are avoided. 

The metrics we consider are now of the form
\begin{equation} \label{metric5d}
g=f^{-1}e^{2\sigma}(d\rho^2+dz^2)-f^{-1}\rho^2 dt^2
+f_{ij}(d\phi^{i}+v^{i}dt)(d\phi^{j}+v^{j}dt),
\end{equation}
where $\rho^2$ is minus the determinant of the metric on the orbits of $\R\times U(1)^2$, $f_{ij}$ is the metric on the orbits of $U(1)^2$ and $f$ is the determinant of $f_{ij}$.
Also as before, the resulting metrics can be extended across the axis if and only if analytic regularity holds, and there are no conical singularities.

The topology of the horizon is prescribed via the \emph{rod structure}, a pair of relatively prime integers $(m,n)$ assigned to each \emph{axis rod} $\subset\Gamma$. Rods are subintervals of $\Gamma$ on which the rank of $f_{ij}$ is either $1$ on axis rods, or $2$ on horizon rods. The end points of horizon rods are \emph{poles}, points separating axis rods are called \emph{corners}. The rod structure $(m,n)$ of an axis rod is defined to be such that $m\partial_{\phi_1}+n\partial_{\phi_2}$ is the generator of $U(1)^2$ which degenerates on that rod. The topology of the horizon is determined by the rod structures of the two adjacent rods. Some simple examples of rod structure configurations are given in Figure~\ref{rods}. It turns out that the potentials $v^i$ are constant on each component of $\Gamma$. The difference between the two constant values of $v^i$ adjacent to any horizon determines the angular momentum associated with that generator for that black hole. The data consists of the rod structure configuration and these constants. Given any data satisfying a mild admissibility condition, we showed in~\cite{khuriweinsteinyamada2017} that a model map could be constructed, and hence a unique harmonic map exists corresponding to this data.

\begin{figure}
	\includegraphics[]{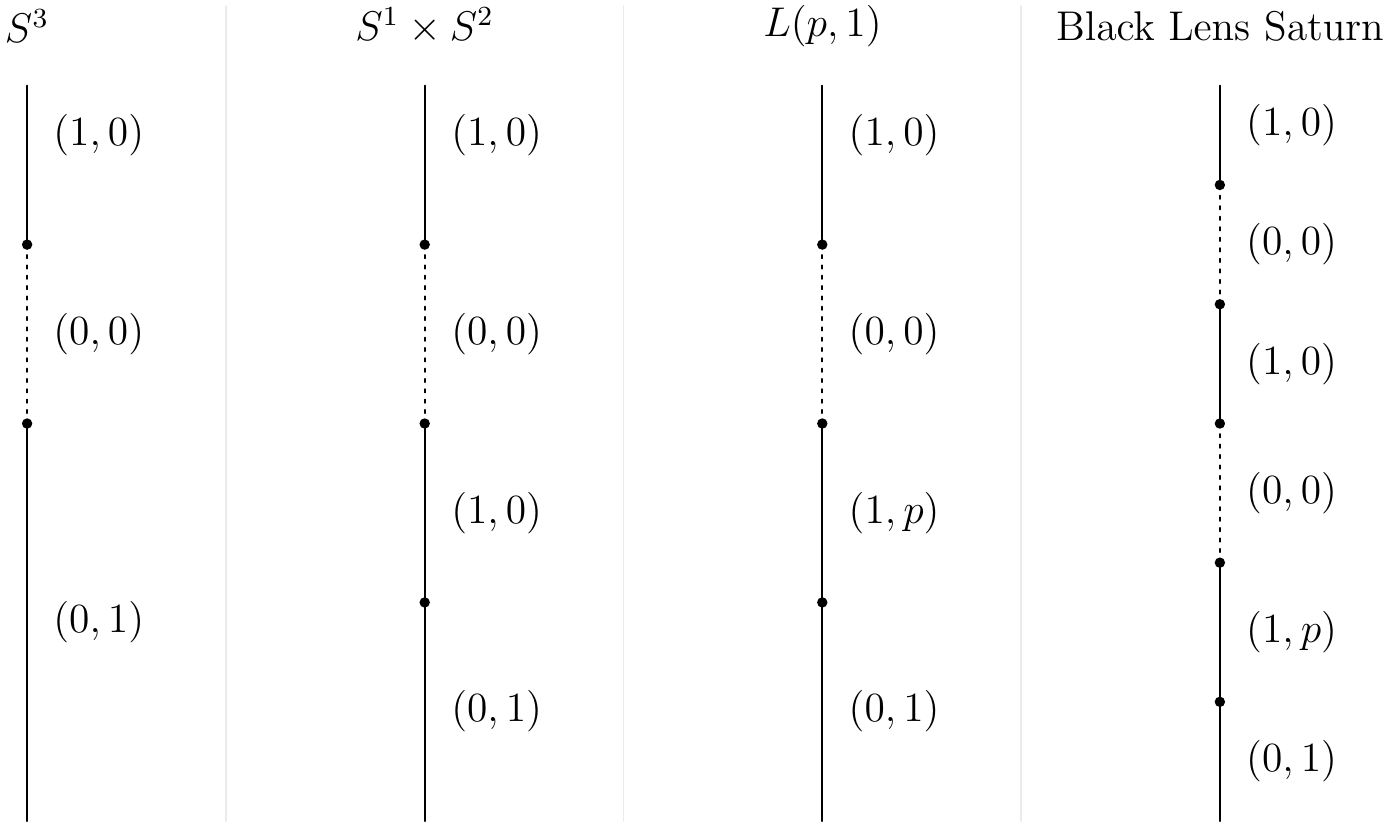}
	\caption{Some examples of rod structure configurations.} \label{rods}
\end{figure}

In~\cite{khurimatsumotoweinsteinyamada}, we analyzed the topology of the domain of outer communications of these solutions using \emph{linear plumbing}.  Plumbing is a construction used to glue together two disk bundles by identifying the bundles over two neighborhoods, crisscrossing base and fiber. Utilizing this technique we were able, given any rod structure configuration, to determine that the topology of the 4$D$ slice in our solutions is a sequence of disk bundle plumbings.

\section{Conclusion}

In this paper, we sketched the proof of an existence and uniqueness result for harmonic maps with prescribed singularities and surveyed a number of applications to general relativity, particularly to black hole existence and uniqueness problems in dimensions 4 and 5, both with non-degenerate and degenerate horizons. In particular, the applications in dimension 5 provided a glimpse into an interesting interface between general relativity and low dimensional topology.

We believe there are many more applications of this versatile method which should yield answers to a number of interesting unanswered questions.


\bibliography{$HOME/Dropbox/.texmf/mybib}

\end{document}